\def \be {\begin{equation}}
\def \ee {\end{equation}}
\def \bea {\begin{eqnarray}}
\def \eea {\end{eqnarray}}
\def \nn {\nonumber}
\def \rr {\raise.35ex\hbox{\small $\prime$}\kern-.17em{\mbox{\large $\imath$}}}
\def \dels {\partial\kern-.6em /\kern.1em}
\def \As {{A\kern-.5em / \kern.5em}}
\def \Ds {D\kern-.7em / \kern.5em}
\def \ks {k\kern-.5em /}
\def \ls {l\kern-.5em /}
\newcommand{\ci}[1]{}
\newcommand{\ba}{\begin{eqnarray}}
\newcommand{\ea}{\end{eqnarray}}
\newcommand{\bal}{\begin{align}}
\newcommand{\eal}{\end{align}}
\newcommand{\bay}[1]{\left(\begin{array}{#1}}
\newcommand{\eay}{\end{array}\right)}
\newcommand{\hide}[1]{}
\newtheorem{theorem}{Theorem}
\newtheorem{lemma}{Lemma}
\newlist{axioms}{enumerate}{2}
\setlist[axioms,1]{label=\textbf{A\arabic{axiomsi}.}, ref=A\arabic{axiomsi}}
\setlist[axioms,2]{label=\textbf{A\arabic{axiomsi}\rlap{\myEnumCounter{axiomsii}}.},%
                   ref=A\arabic{axiomsi}\myEnumCounter{axiomsii},%
                   align=parleft,%
                   leftmargin=0em,%
                   itemsep=1.4ex,%
                   before={\stepcounter{axiomsi}}}
\begin{document}

\begin{titlepage}

\begin{center}

\hfill
\vskip .2in

\textbf{\LARGE
Bell's Inequality and Entanglement in Qubits
\vskip.3cm
}

\vskip .5in
{\large
Po-Yao Chang$^a$ \footnote{e-mail address: pychang@physics.rutgers.edu}, Su-Kuan Chu$^{b,c}$ \footnote{e-mail address: skchu@terpmail.umd.edu} and Chen-Te Ma$^d$ \footnote{e-mail address: yefgst@gmail.com} 
\\
\vskip 1mm
}
{\sl
$^a$
Center for Materials Theory, Rutgers University, Piscataway, New Jersey, 08854, 
\\
$^b$
Joint Quantum Institute, NIST/University of Maryland, College Park,\\
 Maryland 20742,
\\
$^c$
Joint Center for Quantum Information and Computer Science,\\
NIST/University of Maryland, College Park, Maryland 20742, USA.
\\
$^d$
Department of Physics and Center for Theoretical Sciences, \\
National Taiwan University,\\ 
Taipei 10617, Taiwan, R.O.C.
}\\
\vskip 1mm
\vspace{40pt}
\end{center}
\begin{abstract}
We propose an alternative evaluation of quantum entanglement by measuring the maximum violation of the Bell's inequality without performing a partial trace operation. This proposal is demonstrated by bridging the maximum violation of the Bell's inequality and the concurrence of a pure state in an $n$-qubit system, in which one subsystem only contains one qubit and the state is a linear combination of two product states. We apply this relation to the ground states of four qubits in the Wen-Plaquette model and show that they are maximally entangled. A topological entanglement entropy of the Wen-Plaquette model could be obtained by relating the upper bound of the maximum violation of the Bell's inequality to the concurrences of a pure state with respect to different bipartitions.
\end{abstract}

\end{titlepage}

\section{Introduction}
\label{1}
Entanglement measurements provide a way to extract quantum information from many-body wave-functions \cite{Steane:1997kb}. The most significant measure of entanglement is given by the entanglement entropy, 
$S_A=-{\rm Tr } \rho_A \ln \rho_A$ with $\rho_A={\rm Tr_B \rho}$ being the reduced density matrix of the subsystem $A$, and $\rho$ being the density matrix of a Hilbert space $\mathbb{H}=\mathbb{H}_A\otimes\mathbb{H}_B$. In other words, the entanglement entropy characterizes the entanglement between two complementary subsystems $A$ and $B$. The entanglement entropy has been observed experimentally in a two-qubit system, but measuring the entanglement entropy for a higher-qubit system is still under development.

On the other hand, a qualitative detection of quantum entanglement 
could be performed experimentally by the observation of the violation of the Bell's inequality \cite{Clauser:1969ny}.
The original theorem, proposed by the John S. Bell \cite{Bell:1964kc}, states that correlations between the outcomes of different measurements of two separated particles must satisfy the inequality under the local realism.
The violation of the constraints (the Bell's inequality) indicates the quantum effect of correlations or "entangledness" in quantum systems, 
which could be presented in two-qubit systems theoretically \cite{Cirelson:1980ry}. 
Although the violation of the Bell's inequality may not reveal the general structure of entanglement of a quantum state, the relation between the entanglement, measured in terms of the concurrence \cite{Bennett:1996gf}, 
and the violation of the Bell's inequality was shown in two-qubit systems \cite{Verstraete:2001, Horodecki:1995}.
The generalization for higher-qubit systems is still unclear.   

In this letter, we discuss relations between the maximum violation of the Bell's inequality of an $n$-qubit Bell's operator \cite{Gisin:1998ze} and the concurrence of a pure state when the $i$-th qubit operators in the Bell's operator are $\bf{n}\cdot\boldsymbol{\sigma}$, where $\bf{n}$ is a unit vector and $\boldsymbol{\sigma}$ are Pauli matrices.
One crucial point is that the quantum entanglement depends on a partial trace operation in a system, but the Bell's inequality does not. At first glance, this suggests that a quantitative entanglement measure by the Bell's inequality is difficult.
Thus, bridging the maximum violation of the Bell's inequality and measures of quantum entanglement provides a huge application of an entanglement measure without performing a partial trace operation to detect the entanglement quantities.

There are various $n$-qubit systems exhibiting topological properties such as the toric code model \cite{Kitaev:1997wr} and the Wen-Plaquette model \cite{Wen:2003yv}.
One of the topological signature is that the total quantum dimension of quasi-paritcles could be detected
from the universal term in the entanglement entropy \cite{Hamma:2005zz}, i.e., topological entanglement entropy \cite{Kitaev:2005dm,Grover:2013ifa}. This motivates us to apply our theorem to the Wen-Plaquette model. 
We find that the upper bound of the maximum violation of the Bell's inequality in the Wen-Plaquette model indicates that the ground state is maximally entangled. The use of
the maximally entangled property for a six-qubit state in the Wen-Plaquette model could relate to the topological entanglement entropy via the maximum violation of the Bell's inequality.

\section{Entanglement and Maximum Violation}
\label{2}
A Bell's operator of $n$ qubits is defined iteratively as ${\cal B}_n$ \cite{Gisin:1998ze}:
${\cal B}_n={\cal B}_{n-1}\otimes\frac{1}{2}\bigg(A_n+A_n^{\prime}\bigg)+{\cal B}^{\prime}_{n-1}\otimes\frac{1}{2}\bigg(A_n-A_n^{\prime}\bigg),$
where $A_n={\bf a}_n \cdot \boldsymbol{\sigma}$ and $A^{\prime}_n={\bf a}_n^{\prime} \cdot \boldsymbol{\sigma}$ are the operators in the $n$-th qubit
with ${\bf a}_n$ and ${\bf a}_n^{\prime}$ being unit vectors and $\boldsymbol{\sigma}=(\sigma_x,\sigma_y,\sigma_z)$ being the Pauli matrices. The operators $\frac{1}{2} {\cal B}_{n-1}$ and $\frac{1}{2} {\cal B}_{n-1}^{\prime}$ act on the rest of the qubits.
Notice that we choose $\frac{1}{2}{\cal B}_1={\bf b} \cdot \boldsymbol{\sigma}$ and $\frac{1}{2}{\cal B}_1^{\prime}={\bf b}^{\prime} \cdot \boldsymbol{\sigma}$
with ${\bf b}$ and ${\bf b}^{\prime}$ being unit vectors.
It is known that for a $n$-qubit system, the upper bound of the expectation value of 
the Bell operator $\mbox{Tr}(\rho{\cal B}_n)\le 2^{\frac{n+1}{2}}$ \cite{Gisin:1998ze} leads to the violation of the Bell-CHSH inequality \cite{Clauser:1969ny}.

For a given density matrix $\rho$, 
the maximum expectation value of a Bell's operator is referred to as the maximum violation of the Bell's inequality. 
Here we demonstrate a relation between the maximum violation of the Bell's inequality and the concurrence (an entanglement quantity) in an $n$-qubit system when the all $i$-th operators in the Bell's operator are $A_i$ and $A_i^{\prime}$ for $2\le i< n$:
\bea
\tilde{{\cal B}}_n&=&{\cal B}_1\otimes A_2\otimes A_3\cdots\otimes A_{n-2}\otimes A_{n-1}\otimes\frac{1}{2}\bigg(A_n+A_n^{\prime}\bigg)
\nn\\
&&+{\cal B}^{\prime}_1\otimes A_2^{\prime}\otimes A_3^{\prime}\cdots\otimes A_{n-2}^{\prime}\otimes A_{n-1}^{\prime}\otimes\frac{1}{2}\bigg(A_n-A_n^{\prime}\bigg).
\label{Eq:mBell}
\eea

To proceed our derivation, we introduce an $R$-matrix: $R_{i_1i_2\cdots i_n}\equiv\mbox{Tr}(\rho\sigma_{i_1}\otimes\sigma_{i_2}\otimes\cdots\otimes\sigma_{i_n})\equiv R_{Ii_n}$,
where $\rho$ is a density matrix, $\sigma_{i_\alpha}$ is the Pauli matrix with $i_\alpha=x,y,z$ and $\alpha=1, 2, \cdots, n$ are the site indices. We express the $R$-matrix as a $3^{n-1} \times 3$ matrix $R_{I i_n}$ with the first index being $I=i_1i_2\cdots i_{n-1}$ and the second index being $i_n$. In a two-qubit system, the maximum violation of the Bell's inequality
is computed from a $3 \times 3$ matrix $R_{ij}$ defined above \cite{Horodecki:1995}.
Now we generalize the maximum violation of the Bell's inequality ($\tilde{{\cal B}}_n$) in a $n$-qubit system
by using the $R$-matrix.
\begin{lemma}
\label{me}
The maximum violation of the Bell's inequalities $\gamma\equiv\max_{\tilde{{\cal B}}_n} {\rm Tr}(\rho{\tilde{\cal B}}_n)\le2\sqrt{u_1^2+u_2^2}$, where $u^2_1$ and $u^2_2$ are the first two largest eigenvalues of $R^\dagger R$ when $n>2$ and $\gamma=2\sqrt{u_1^2+u_2^2}$ when $n=2$.
\end{lemma}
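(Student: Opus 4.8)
The plan is to turn the maximization defining $\gamma$ into a finite-dimensional linear-algebra problem for the fixed $3^{n-1}\times 3$ matrix $R$. First I would expand $\Tr(\rho\tilde{\cal B}_n)$ in the $R$-matrix. Writing $\frac12{\cal B}_1=\mathbf{b}\cdot\boldsymbol\sigma$, $\frac12{\cal B}_1'=\mathbf{b}'\cdot\boldsymbol\sigma$, $A_i=\mathbf{a}_i\cdot\boldsymbol\sigma$, $A_i'=\mathbf{a}_i'\cdot\boldsymbol\sigma$, and introducing the orthogonal vectors $\mathbf{c}\equiv\frac12(\mathbf{a}_n+\mathbf{a}_n')$ and $\mathbf{d}\equiv\frac12(\mathbf{a}_n-\mathbf{a}_n')$ (which satisfy $\mathbf{c}\cdot\mathbf{d}=0$ and $\|\mathbf{c}\|^2+\|\mathbf{d}\|^2=1$ because $\mathbf{a}_n,\mathbf{a}_n'$ are unit vectors), a direct computation gives
\begin{equation}
\Tr(\rho\tilde{\cal B}_n)=2\big(\mathbf{u}^{T}R\,\mathbf{c}+\mathbf{u}'^{T}R\,\mathbf{d}\big),
\end{equation}
where $\mathbf{u}\equiv\mathbf{b}\otimes\mathbf{a}_2\otimes\cdots\otimes\mathbf{a}_{n-1}$ and $\mathbf{u}'\equiv\mathbf{b}'\otimes\mathbf{a}_2'\otimes\cdots\otimes\mathbf{a}_{n-1}'$ are unit vectors in $\mathbb{R}^{3^{n-1}}$ (tensor products of unit vectors), indexed exactly by the composite index $I$ of $R_{Ii_n}$. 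Here $R$ is real, so $R^{T}=R^\dagger$, and the overall factor $2$ comes from ${\cal B}_1=2\,\mathbf{b}\cdot\boldsymbol\sigma$.

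Next I would bound the two terms separately. Because $\mathbf{u}$ and $\mathbf{u}'$ sit in different summands and range independently, Cauchy--Schwarz gives $\mathbf{u}^{T}R\mathbf{c}\le\|R\mathbf{c}\|$ and $\mathbf{u}'^{T}R\mathbf{d}\le\|R\mathbf{d}\|$, hence $\Tr(\rho\tilde{\cal B}_n)\le 2(\|R\mathbf{c}\|+\|R\mathbf{d}\|)$. It then remains to maximize over the admissible $\mathbf{c},\mathbf{d}$. Writing $\mathbf{c}=\cos\alpha\,\mathbf{e}_1$ and $\mathbf{d}=\sin\alpha\,\mathbf{e}_2$ with $\mathbf{e}_1\perp\mathbf{e}_2$ orthonormal in $\mathbb{R}^3$, the bound becomes $2(\cos\alpha\,\|R\mathbf{e}_1\|+\sin\alpha\,\|R\mathbf{e}_2\|)$; optimizing $\alpha$ yields $2\sqrt{\|R\mathbf{e}_1\|^2+\|R\mathbf{e}_2\|^2}=2\sqrt{\mathbf{e}_1^{T}R^\dagger R\,\mathbf{e}_1+\mathbf{e}_2^{T}R^\dagger R\,\mathbf{e}_2}$. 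By the Ky Fan maximum principle, the maximum of $\mathbf{e}_1^{T}R^\dagger R\mathbf{e}_1+\mathbf{e}_2^{T}R^\dagger R\mathbf{e}_2$ over orthonormal pairs equals the sum $u_1^2+u_2^2$ of the two largest eigenvalues of the $3\times3$ positive-semidefinite matrix $R^\dagger R$. This produces the claimed bound $\gamma\le 2\sqrt{u_1^2+u_2^2}$.

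Finally I would treat $n=2$ to obtain equality and isolate the gap for $n>2$. The only inequality in the estimate is the Cauchy--Schwarz step, saturated exactly when $\mathbf{u}\parallel R\mathbf{c}$ and $\mathbf{u}'\parallel R\mathbf{d}$. For $n=2$ one has $\mathbf{u}=\mathbf{b}$ and $\mathbf{u}'=\mathbf{b}'$, arbitrary unit vectors in $\mathbb{R}^3$, so these alignment conditions can always be met; moreover every admissible orthogonal pair $(\mathbf{c},\mathbf{d})$ is realized by $\mathbf{a}_2=\mathbf{c}+\mathbf{d}$, $\mathbf{a}_2'=\mathbf{c}-\mathbf{d}$, so all the maximizations are simultaneously attainable and $\gamma=2\sqrt{u_1^2+u_2^2}$. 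The main obstacle is precisely this attainability for $n>2$: the vectors $\mathbf{u},\mathbf{u}'$ are forced to be \emph{product} (decomposable) vectors, a measure-zero subvariety of the unit sphere in $\mathbb{R}^{3^{n-1}}$, while the optimal target directions $R\mathbf{c},R\mathbf{d}$ are generically entangled, so Cauchy--Schwarz cannot be saturated and only the upper bound survives. I would therefore state the $n>2$ case as a bound and check carefully that neither the $\alpha$-optimization nor the eigenvalue step introduces extra slack, so that the product-vector constraint is the sole source of the inequality.
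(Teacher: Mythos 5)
Your proposal is correct and follows essentially the same route as the paper's proof: the same orthogonal decomposition of $\mathbf{a}_n\pm\mathbf{a}_n'$ into $\cos\theta$, $\sin\theta$ components, the same Cauchy--Schwarz bound on the product vectors $\hat{B},\hat{B}'$, the same reduction to the two largest eigenvalues of $R^\dagger R$, and the same observation that the product (decomposable) structure of $\hat{B},\hat{B}'$ is what blocks saturation for $n>2$ while permitting equality at $n=2$. You merely make explicit some steps the paper leaves implicit (the Ky Fan maximum principle and the attainability construction $\mathbf{a}_2=\mathbf{c}+\mathbf{d}$, $\mathbf{a}_2'=\mathbf{c}-\mathbf{d}$).
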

\begin{proof}[Proof]
We first introduce two three-dimensional orthonormal vectors $\hat{c}$ and $\hat{c'}$ such that $\hat{a}+\hat{a'}=2\hat{c} \cos \theta$
and $\hat{a}-\hat{a'}=2 \hat{c'} \sin \theta$, where $\theta\in\lbrack 0,\frac{1}{2}\pi\rbrack$, through three-dimensional unit vectors $\hat{a}$ and $\hat{a'}$. The maximum violation of the Bell's inequality is defined as $\gamma\equiv\max_{\tilde{{\cal B}}_n} \mbox{Tr}(\rho\tilde{{\cal B}}_n)$
with the Bell's operator of the $n$-qubit $\tilde{{\cal B}}_n$ defined in \eqref{Eq:mBell}.
By using the $R$-matrix and the unit vectors $\hat{B}\equiv\hat{B}_I=\hat{B}_{i_1i_2\cdots i_{n-1}}\equiv\hat{a}_{1, i_1}\hat{a}_{2, i_2}\cdots\hat{a}_{n-1, i_{n-1}}$, $\hat{B'}\equiv\hat{B'}_I=\hat{B'}_{i_1i_2\cdots i_{n-1}}\equiv\hat{a'}_{1, i_1}\hat{a'}_{2, i_2}\cdots\hat{a'}_{n-1, i_{n-1}}$, $\hat{a}\equiv\hat{a}_{n, i_n}$, and $\hat{a'}\equiv\hat{a^{\prime}}_{n, i_n}$, in which ${\hat{B}}$ and $\hat{B^{\prime}}$ are unit vectors in $3^{n-1}$ dimensions, we have
$\gamma
=\max_{\hat{B},\hat{B'},\hat{a},\hat{a'}}  \bigg(\langle\hat{B}, R (\hat{a}+\hat{a'})\rangle+\langle\hat{B'}, R (\hat{a}-\hat{a'})\rangle\bigg)   \le\max_{\hat{c}, \hat{c'},\theta} \bigg( 2||R \hat{c}|| \cos \theta +2||R \hat{c'}|| \sin \theta\bigg)  
= 2\sqrt{u_1^2+u_2^2}$, 
in which $u^2_1$ and $u^2_2$ are the first two largest eigenvalues of $R^\dagger R$.
The inner product and the norm are defined as $\langle P, Q\rangle\equiv P^{\dagger}Q$ and $||U||\equiv \sqrt{U^{\dagger}U}$. Because $R(\hat{a}+\hat{a^{\prime}})$ and $R(\hat{a}-\hat{a^{\prime}})$ are defined in the $3^{n-1}$ dimensions and each unit vector $\hat{B}$ and $\hat{B^{\prime}}$ only contains $2(n-1)$ parameters, it could not guarantee that $\hat{B}$ parallels $R(\hat{a}+\hat{a^{\prime}})$ and $\hat{B^{\prime}}$ parallels $R(\hat{a}-\hat{a^{\prime}})$, except for $n=2$.
\end{proof}
An earlier approach to relate the maximum violation of the Bell's inequality
and the concurrence of a pure state $C(\psi)\equiv\sqrt{2(1-{\rm Tr} \rho_A^2)}$ \cite{Bennett:1996gf} in a two-qubit system is discussed in \cite{Verstraete:2001}.

We generalize the relation of the maximum violation of the Bell's inequality
and the concurrence of a pure state in an $n$-qubit system when a state is a linear combination of two product states.
The concurrence is computed with respect to the bipartition with ($n-1$) qubits in subsystem $B$ and one qubit in subsystem $A$.
\begin{theorem}
\label{mec}
For an $n$-qubit state $|\psi \rangle=|u\rangle_B\otimes\big(\lambda_+|v\rangle_B \otimes |1\rangle_A +\lambda_-|\tilde{v}\rangle_B \otimes |0\rangle_A\big)$
with $\lambda_+|v\rangle_B \otimes |1\rangle_A +\lambda_-|\tilde{v}\rangle_B \otimes |0\rangle_A$ being a non-biseparable, $\alpha$-qubit state,
$|u\rangle_B$, $|v\rangle_B$, $|\tilde{v}\rangle_B$ being product states, and
the maximum violation of the Bell's inequality in an $n$-qubit system is $\gamma=2f_{\alpha}(\psi)$,
in which the function $f_{\alpha}(\psi)$ is defined as:\\
$(1)$ $\alpha$  is an even number:
\begin{align}
&f_{\alpha}(\psi)\equiv\sqrt{1+2^{\alpha-2}C^2(\psi)},& \qquad    &2^{2-\alpha}\ge C^2(\psi), \nn\\
&f_{\alpha}(\psi)\equiv2^{\frac{\alpha-1}{2}}C(\psi),& \qquad       &2^{2-\alpha}\le C^2(\psi).
\end{align}
$(2)$ $\alpha$  is an odd number:
\begin{align}
&f_{\alpha}(\psi)\equiv\sqrt{1+\big(2^{\alpha-2}-1\big)C^2(\psi)},& \qquad  &\frac{1}{1+2^{\alpha-2}}\ge C^2(\psi),\nn\\
&f_{\alpha}(\psi)\equiv2^{\frac{\alpha-1}{2}}C(\psi),& \qquad &\frac{1}{1+2^{\alpha-2}}\le C^2(\psi).
\end{align}
Here, $C(\psi)$ is the concurrence of a pure state computed with respect to the bipartition that subsystem $B$ contains $(n-1)$ qubits
and subsystem $A$ contains one qubit. 
\end{theorem}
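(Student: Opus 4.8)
The plan is to reduce the $n$-qubit problem to the $\alpha$-qubit ``core'' and then to compute the matrix $R^{\dagger}R$ of Lemma \ref{me} explicitly, reading off its two largest eigenvalues as functions of the concurrence. Since $|\psi\rangle=|u\rangle_B\otimes|\phi\rangle$ with $|\phi\rangle=\lambda_+|v\rangle_B|1\rangle_A+\lambda_-|\tilde v\rangle_B|0\rangle_A$ and $|u\rangle_B=\bigotimes_k|u_k\rangle$ a product state on the $n-\alpha$ spectator qubits, the correlation tensor factorizes on those qubits: each spectator contributes a Bloch factor $\langle u_k|\sigma_{i_k}|u_k\rangle=(\vec n_k)_{i_k}$ with $|\vec n_k|=1$. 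In the expression $\gamma=\max\big(\langle\hat B,R(\hat a+\hat a')\rangle+\langle\hat B',R(\hat a-\hat a')\rangle\big)$ from the proof of Lemma \ref{me}, the spectator directions may be aligned as $\hat a_k=\hat a_k'=\vec n_k$, each contributing a factor $1$; this leaves exactly the maximization for the $\alpha$-qubit state $|\phi\rangle$, so it suffices to treat $n=\alpha$. As a consistency check, the case $\alpha=2$ (where $\hat B$ is a single three-vector and the product constraint is vacuous) must recover the two-qubit result $\gamma=2\sqrt{1+C^2}$ of \cite{Verstraete:2001,Horodecki:1995}.

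For the core I would compute the three columns of $R$ indexed by the single $A$-qubit. Writing $\Sigma_I=\sigma_{i_1}\otimes\cdots\otimes\sigma_{i_{\alpha-1}}$, the qubit matrix elements give $R_{Iz}=-|\lambda_+|^2\langle v|\Sigma_I|v\rangle+|\lambda_-|^2\langle\tilde v|\Sigma_I|\tilde v\rangle$, while $R_{Ix}$ and $R_{Iy}$ are proportional to the real and imaginary parts of the transition amplitude $\lambda_+^{*}\lambda_-\langle v|\Sigma_I|\tilde v\rangle$. Because $|v\rangle$ and $|\tilde v\rangle$ are product states, each of these is a product over the $\alpha-1$ qubits, and the Pauli completeness relation $\sum_{\mu=0}^{3}\langle a|\sigma_\mu|b\rangle\langle c|\sigma_\mu|d\rangle=2\langle a|d\rangle\langle c|b\rangle$ collapses the sums $\sum_I(\cdots)$ into products of single-qubit overlaps. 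In particular $\sum_I R_{Ix}^2+\sum_I R_{Iy}^2=4|\lambda_+\lambda_-|^2\prod_k\big(2-|\langle v_k|\tilde v_k\rangle|^2\big)$, while the reduced density matrix on $A$ gives $C^2(\psi)=4|\lambda_+\lambda_-|^2\big(1-|\langle v|\tilde v\rangle|^2\big)$; this is what ties the entries of $R^{\dagger}R$ to the concurrence.

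I would then diagonalize the $3\times3$ matrix $R^{\dagger}R$. After rotating each pair $|v_k\rangle,|\tilde v_k\rangle$ into the canonical position singled out by the hypotheses (non-biseparable product components, represented by orthogonal strings such as $|v\rangle=|0\cdots0\rangle$, $|\tilde v\rangle=|1\cdots1\rangle$), one finds that the off-diagonal entries of $R^{\dagger}R$ vanish because $R_{Ix}$, $R_{Iy}$ and $R_{Iz}$ have disjoint supports in $I$, and the eigenvalues are $2^{\alpha-2}C^2$ (twice) together with the all-$z$ entry $R_{(z\cdots z)z}^2$. The crucial point, and the origin of the even/odd dichotomy, is that $\langle\tilde v|\sigma_z^{\otimes(\alpha-1)}|\tilde v\rangle=(-1)^{\alpha-1}$, so this entry equals $\big(|\lambda_-|^2+(-1)^{\alpha-1}|\lambda_+|^2\big)^2$, i.e.\ $1$ for even $\alpha$ and $1-C^2$ for odd $\alpha$. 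Selecting the two largest eigenvalues then reproduces $f_\alpha$: comparing $2^{\alpha-2}C^2$ with $1$ (even $\alpha$) or with $1-C^2$ (odd $\alpha$) produces exactly the stated thresholds $2^{2-\alpha}$ and $1/(1+2^{\alpha-2})$, and substituting $u_1^2+u_2^2$ into $\gamma=2\sqrt{u_1^2+u_2^2}$ gives $\gamma=2f_\alpha(\psi)$.

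I expect the main obstacle to be the passage from the inequality $\gamma\le 2\sqrt{u_1^2+u_2^2}$ of Lemma \ref{me} to an equality. Because $\hat B=\hat a_1\otimes\cdots\otimes\hat a_{\alpha-1}$ is constrained to be a product, $\langle\hat B,R\hat c\rangle$ can only reach the \emph{injective} norm of the slice $R\hat c$, not its Euclidean norm, so for $\alpha\ge3$ the bound is not automatically saturated; the heart of the theorem is to exhibit explicit product settings $\hat B,\hat B'$ and $A$-directions $\hat a,\hat a'$ that attain it for states of this two-product-state form. This is precisely where the restriction to non-biseparable product components is used, and it is the step that must be checked most carefully. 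A secondary difficulty is purely computational: establishing for general $\alpha$ that $R^{\dagger}R$ is diagonal in the adapted basis and that its spectrum collapses onto a function of $C(\psi)$ alone, while correctly tracking the sign $(-1)^{\alpha-1}$ that separates the even and odd cases.
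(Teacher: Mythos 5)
Your proposal follows the same route as the paper's own proof: reduce to the $\alpha$-qubit core, compute the three columns $R_{Ix}$, $R_{Iy}$, $R_{Iz}$ of the $R$-matrix for the two-product-state form, observe that $R^\dagger R$ is diagonal with eigenvalues $2^{\alpha-2}C^2(\psi)$ (twice) plus an all-$z$ entry equal to $1$ (even $\alpha$) or $1-C^2(\psi)$ (odd $\alpha$), and feed the two largest eigenvalues into Lemma \ref{me}; the thresholds $2^{2-\alpha}$ and $1/(1+2^{\alpha-2})$ then come from deciding which eigenvalue drops to third place, exactly as in the paper. (One bookkeeping slip: the all-$z$ entry is $\big(-\lambda_+^2+(-1)^{\alpha-1}\lambda_-^2\big)^2$, the extra sign coming from $\sigma_z$ acting on $|1\rangle_A$; your displayed expression has the parity reversed, although the conclusion you state, $1$ for even and $1-C^2$ for odd, is correct and agrees with the paper.)

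The genuine gap is the step you name but never perform: promoting the inequality $\gamma\le2\sqrt{u_1^2+u_2^2}$ of Lemma \ref{me} to an equality. What you actually establish is only $\gamma\le 2f_\alpha(\psi)$, whereas the theorem asserts equality. The paper spends the entire second half of its proof on this point: it imposes ratio conditions $\big|R_{Ix}(\hat a_x+\hat a'_x)/R_{I'y}(\hat a_y+\hat a'_y)\big|=\big|B_I/B_{I'}\big|$ (and the primed analogue) intended to force the product vector $\hat B$ to be parallel to $R(\hat a+\hat a')$ and $\hat B'$ parallel to $R(\hat a-\hat a')$, and then writes down explicit settings in the two regimes: $\hat c=(1,1,0)/\sqrt2$, $\hat c'=(1,-1,0)/\sqrt2$, $\theta=\pi/4$ when $u_1^2=(R^\dagger R)_{xx}$, $u_2^2=(R^\dagger R)_{yy}$; and $\hat c=(0,0,1)$, $\hat c'=(1,1,0)/\sqrt2$, $\cos\theta=u_1/\sqrt{u_1^2+u_2^2}$, $\sin\theta=u_2/\sqrt{u_1^2+u_2^2}$ in the other case. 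A complete write-up must either reproduce such a construction and verify the parallelism claim, or supply a substitute. Moreover, your caveat about the injective versus Euclidean norm is not a formality — it is exactly where the difficulty sits, and it resists being waved through. For instance, for the $\alpha=3$ GHZ state the slice $R\hat c$ with $\hat c=(1,1,0)/\sqrt2$ has components proportional to $(1,-1,-1,-1)$ on the index pairs $(xx),(yy),(xy),(yx)$; viewed as a $2\times2$ array this has rank two, so it cannot be parallel to any product vector $\hat b\otimes\hat a_2$, and the product-constrained maximum of $\langle\hat B,R\hat c\rangle$ is $1$ rather than $\|R\hat c\|=\sqrt2$. So the saturation step is not merely unfinished in your proposal: it is the step on which the whole theorem turns, and it demands precisely the careful check you defer.
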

\begin{proof}[Proof]
The Hilbert space for an $n$-qubit system is bipartitioned as $\mathbb{H}=\mathbb{H}_B \otimes \mathbb{H}_A$,
in which dimensions of the sub-Hilbert spaces are $\dim(\mathbb{H}_A)=2$ and $\dim(\mathbb{H}_B)=2^{n-1}$.
We consider a pure state with respect to this bipartition 
$|\psi \rangle=|u\rangle_B\otimes(\lambda_+|v\rangle_B \otimes |1\rangle_A +\lambda_-|\tilde{v}\rangle_B \otimes |0\rangle_A)$,
where $|u\rangle_B\otimes|v\rangle_B$ and $|u\rangle_B\otimes|\tilde{v}\rangle_B$ are the product states in $\mathbb{H}_B$
and $|1\rangle_A$ and $|0\rangle_A$ are the states in $\mathbb{H}_A$. By using the property ${\rm Tr}\rho_A=\lambda_+^2 +\lambda_-^2=1$
and $C(\psi)=\sqrt{2(1-\lambda_+^4-\lambda_-^4)}$, the coefficients $\lambda_{\pm}$ can be expressed in terms of the concurrence, $\lambda_{\pm}^2=\big(1\pm\sqrt{1-C^2(\psi)}\big)/2$.
The matrix elements of the $R$-matrix are\\
\bea
R_{Ix}&=&\lambda_+\lambda_- {\rm Tr}  \left [ \bigotimes_{I_1} \sigma_{I_ 1}| u \rangle \langle u |     
\bigotimes_{I_2} \sigma_{I_ 2} (|v \rangle \langle \tilde{v}|+|\tilde{v} \rangle \langle v|)  \right],  \notag\\
R_{Iy}&=&-i \lambda_+\lambda_- {\rm Tr}  \left [ \bigotimes_{I_1} \sigma_{I_ 1}| u \rangle \langle u |     
\bigotimes_{I_2} \sigma_{I_ 2} (|v \rangle \langle \tilde{v}|-|\tilde{v} \rangle \langle v|)  \right],  \notag\\
R_{Iz}&=& -\lambda_+^2{\rm Tr}  \left [ \bigotimes_{I_1} \sigma_{I_ 1}| u \rangle \langle u |     
\bigotimes_{I_2} \sigma_{I_ 2} |v \rangle \langle {v}|  \right] 
 +\lambda_-^2{\rm Tr}  \left [ \bigotimes_{I_1} \sigma_{I_ 1}| u \rangle \langle u |     
\bigotimes_{I_2} \sigma_{I_ 2} |\tilde{v} \rangle \langle \tilde{v}|  \right],
\nn\\
\eea
where $I\equiv I_1I_2$. Here we choose the basis that $|0\rangle\equiv(1,0)^{\rm T}$
and $|1\rangle\equiv(0,1)^{\rm T}$.

The conditions for non-vanishing matrix elements $R_{Ix}$ are
$(n-\alpha)$  number of $\sigma_z$ matrices in $I_1$, $(\alpha-1-i)$ number of $\sigma_x$ matrices and $i$ number of $\sigma_y$
matrices in $I_2$ with $i$ being an even integer.
The conditions for non-vanishing matrix elements $R_{Iy}$ are
$(n-\alpha)$  number of $\sigma_z$ matrices in $I_1$, $(\alpha-1-j)$ number of $\sigma_x$ matrices and $j$ number of $\sigma_y$
matrices in $I_2$ with $j$ being an odd integer.
The conditions for non-vanishing matrix elements $R_{Iz}$ are
$(n-\alpha)$  number of $\sigma_z$ matrices in $I_1$, $(\alpha-1)$ number of $\sigma_z$ 
matrices in $I_2$. 

The above conditions lead to the diagonal form of the matrix $R^\dagger R$.
In the case that $\alpha$ is an even integer, the set of eigenvalues of $R^\dagger R$ is $\{2^{\alpha-2}C^2(\psi),  2^{\alpha-2}C^2(\psi),  1 \}$.
In the case that $\alpha$ is an odd integer, the set of eigenvalues of $R^\dagger R$  is $\{2^{\alpha-2}C^2(\psi),  2^{\alpha-2}C^2(\psi),  1-C^2(\psi) \}$.

Now we want to show $\gamma=\max_{\hat{B},\hat{B'},\hat{a},\hat{a'}}  \langle\hat{B}, R (\hat{a}+\hat{a'})\rangle+\langle\hat{B'}, R (\hat{a}-\hat{a'})\rangle   
= 2\sqrt{u_1^2+u_2^2}$, in which $u^2_1$ and $u^2_2$ are the first two largest eigenvalues of $R^\dagger R$, $\hat{B}\equiv\hat{a}_{1, i_1}\hat{a}_{2, i_2}\cdots\hat{a}_{n-1, i_{n-1}}$, $\hat{B'}\equiv\hat{a'}_{1, i_1}\hat{a'}_{2, i_2}\cdots\hat{a'}_{n-1, i_{n-1}}$, $\hat{a}\equiv\hat{a}_{n, i_n}$ and $\hat{a^{\prime}}\equiv\hat{a^{\prime}}_{n, i_n}$, where $\hat{a}_{n, i_n}+\hat{a^{\prime}}_{n ,i_n}\equiv2\hat{c}_{n, i_n}\cos\theta$, $\hat{a}_{n, i_n}-\hat{a^{\prime}}_{n ,i_n}\equiv2\hat{c^{\prime}}_{n, i_n}\sin\theta$, and $\theta\in\lbrack 0, \pi/2\rbrack$. 
This equality holds when $\hat{B}$ parallels $R(\hat{a}+\hat{a^{\prime}})$ and $\hat{B'}$ parallels $R(\hat{a}-\hat{a^{\prime}})$.
%If $\hat{B}$ parallels $R(\hat{a}+\hat{a^{\prime}})$ and $\hat{B^{\prime}}$ parallels $R(\hat{a}-\hat{a^{\prime}})$, 
One natural choice of $a_{\alpha,  i_{\alpha}}$ and $a_{\alpha,^\prime  i_{\alpha}}$ could be obtained by equating two ratios,
$\big|R_{Ix}(\hat{a}_x + \hat{a^{\prime}}_x)/R_{I'y}(\hat{a}_y+\hat{a^{\prime}}_y)\big|=\big|B^{}_I/B^{}_{I'}\big|$
and $\big|R_{Ix}(\hat{a}_x-\hat{a^{\prime}}_x)/R_{I'y}(\hat{a}_y-\hat{a^{\prime}}_y)\big|=\big|B^{\prime}_I/B^{\prime}_{I'}\big|$, where $I$ and $I'$ is chosen in a way that one site of the $I_2$ in $I$ is labeled by $x$ and in $I'$ is labeled by $y$, and other sites of the $I_2$ in $I$ and $I^{\prime}$ are labeled by the same symbols.
This leads to $\big|\hat{a}_{I, x}/\hat{a}_{I^{\prime}, y}\big|=\big|\hat{c}_{n, x}/\hat{c}_{n, y}\big|$ 
and $\big|\hat{a^{\prime}}_{I, x}/\hat{a^{\prime}}_{I^{\prime}, y}\big|=\big|\hat{c^{\prime}}_{n, x}/\hat{c^{\prime}}_{n, y}\big|.$
When $u^2_1=(R^{\dagger}R)_{xx}$ and $u^2_2=(R^{\dagger}R)_{yy}$, we could choose 
$(\hat{c}_{n, x}, \hat{c}_{n, y}, \hat{c}_{n, z})^{\rm T}=\big(1/\sqrt{2}\big)(1, 1, 0)^{\rm T}$, $(\hat{c^{\prime}}_{n, x}, \hat{c^{\prime}}_{n, y}, \hat{c^{\prime}}_{n, z})^{\rm T}=\big(1/\sqrt{2}\big)(1, -1, 0)^{\rm T}$, 
and $\cos(\theta)=\sin(\theta)=\sqrt{2}/2$ to show $\gamma=2\sqrt{u_1^2+u_2^2}$. For the other case $u^2_1=(R^{\dagger}R)_{zz}$ and $u^2_2=(R^{\dagger}R)_{xx}$, we could choose 
$(\hat{c}_{n, x}, \hat{c}_{n, y}, \hat{c}_{n, z})^{\rm T}=(0, 0, 1)^{\rm T}$, $(\hat{c^{\prime}}_{n, x}, \hat{c^{\prime}}_{n, y}, \hat{c^{\prime}}_{n, z})^{\rm T}=\big(1/\sqrt{2}\big)(1, 1, 0)^{\rm T}$,
$\cos(\theta)=u_1/\sqrt{u_1^2+u_2^2}$ and $\sin(\theta)=u_2/\sqrt{u_1^2+u_2^2}$ to prove $\gamma=2\sqrt{u_1^2+u_2^2}$.

%Because we also have $R_{I_1xJx}(\hat{a}_x+\hat{a}^{\prime}_x)=-R_{I_1yJy}(\hat{a}_y+\hat{a}^{\prime}_y)$, $R_{I_1xJx}(\hat{a}_x-\hat{a}^{\prime}_x)=-R_{I_1 yJy}(\hat{a}_y-\hat{a}_y^{\prime})$, the non-vanishing elements of the $R_{I_1z}$ only exists for $I=zz\cdots z $, and indices $L_1$, $L_2$ and $L_3$ are different for non-vanishing matrix elements $R_{L_1x}$, $R_{L_2y}$ and $R_{L_3z}$, we could find unit vectors $\hat{B}$, $\hat{B}^{\prime}$, $\hat{a}$ and $\hat{a}^{\prime}$ to satisfy $\gamma=\max_{\hat{B},\hat{B'},\hat{a},\hat{a'}}  \langle\hat{B}, R (\hat{a}+\hat{a'})\rangle+\langle\hat{B'}, R (\hat{a}-\hat{a'})\rangle   
%= 2\sqrt{u_1^2+u_2^2}$, in which $u^2_1$ and $u^2_2$ are the first two largest eigenvalues of $R^\dagger R$.
%Hence, $\gamma=2f_{\alpha}(\psi)$.
\end{proof}
We demonstrate that the maximum violation of the Bell's inequality ($\tilde{\cal{B}}_n$) is directly related to the concurrence of the pure state
when the subsystem $A$ only contains one qubit and the state is a linear combination of two product states. 
For the maximally entangled state with the concurrence $C(\psi)=1$, the maximum violation of the Bell's inequality $\gamma=2^{\frac{\alpha+1}{2}}\leq 2^{\frac{n+1}{2}}$
satisfies the upper bound of the Bell's operator of an $n$-qubit system. Although we do not use the most generic form of the Bell's operator, the information of the state could be complete when the $n$-th qubit operators are measured. The extrapolation of the maximum violation of the Bell's inequality ($\tilde{\cal{B}}_n$) from the $R$-matrix could be equivalent to direct computing of the maximum violation of the Bell's inequality without losing generality.

%We apply our theorem to a generic mixed state, $\rho=\sum_k p_k \rho_k$ in a $n$-qubit system, where 
%$\sum_k p_k=1$ and $\rho_k=\sum_{i,j=1}^{2^n} \sqrt{v_{ki} v_{kj}} | i\rangle \langle j|$ with $|i\rangle$ being a product state
%and $\sum_{i=1}^{2^n} v_{k i}=1$. We obtain
%\bea
%\gamma&\le&4-2^{n+1}+ \sum_k\sum_{i,j; i\neq j}^{2^n}  p_k(v_{ki}+v_{kj})f_{\alpha} (\psi_{k;i,j} ),
%\eea
%where $|\psi_{k;i,j} \rangle=\sqrt{\frac{v_{ki}}{v_{ki}+v_{kj}}}|i \rangle+\sqrt{\frac{v_{kj}}{v_{ki}+v_{kj}}}|j \rangle$ is a pure state (see Supplementary Material \cite{SM}). 
%We show that the upper bound of the maximum violation of the Bell's inequality for this mixed state is a linear combination of 
%the maximum violation of the Bell's inequality from different pure states.

\section{Applications to the Wen-Plaquette Model}
The Wen-Plaquette model \cite{Wen:2003yv} is defined by the Hamiltonian on a two-dimensional periodic lattice (torus) as
%\begin{align}
$H=\sum_{i} \sigma_x^i\sigma_y^{i+\hat{x}}\sigma_x^{i+\hat{x}+\hat{y}}\sigma_y^{i+\hat{y}}$,
%\label{Eq:toric}
%\end{align}
in which qubits live on the vertices with the four-spin interaction on each plaquette.
A ground state of the Hamiltonian is an $n$-qubit state with $n$ being the number of vertices.
We first apply our Theorem to a four-qubit state, with the geometry of the system containing four vertices, 
eight edges, and four faces \cite{EN}.
There are four degenerate ground states $|G\rangle_{\rm 4-qubit} = \frac{1}{\sqrt{2}} (|0000\rangle + |1111 \rangle)$,
$ \frac{1}{\sqrt{2}} (|1010\rangle + |0101 \rangle)$, $ \frac{1}{\sqrt{2}} (|0011\rangle - |1100 \rangle)$, $ \frac{1}{\sqrt{2}} (|1001\rangle - |0110 \rangle)$.
The order of each site in these four-qubit states are defined in the Fig. \ref{F1} (a). 
Since the maximum violation of the Bell's inequalities for these ground states are $\gamma= 4\sqrt{2}$, the ground states have concurrence $C(\psi)=1$ according to the Theorem and are maximally entangled.

\begin{figure}
  \includegraphics[width=\columnwidth] {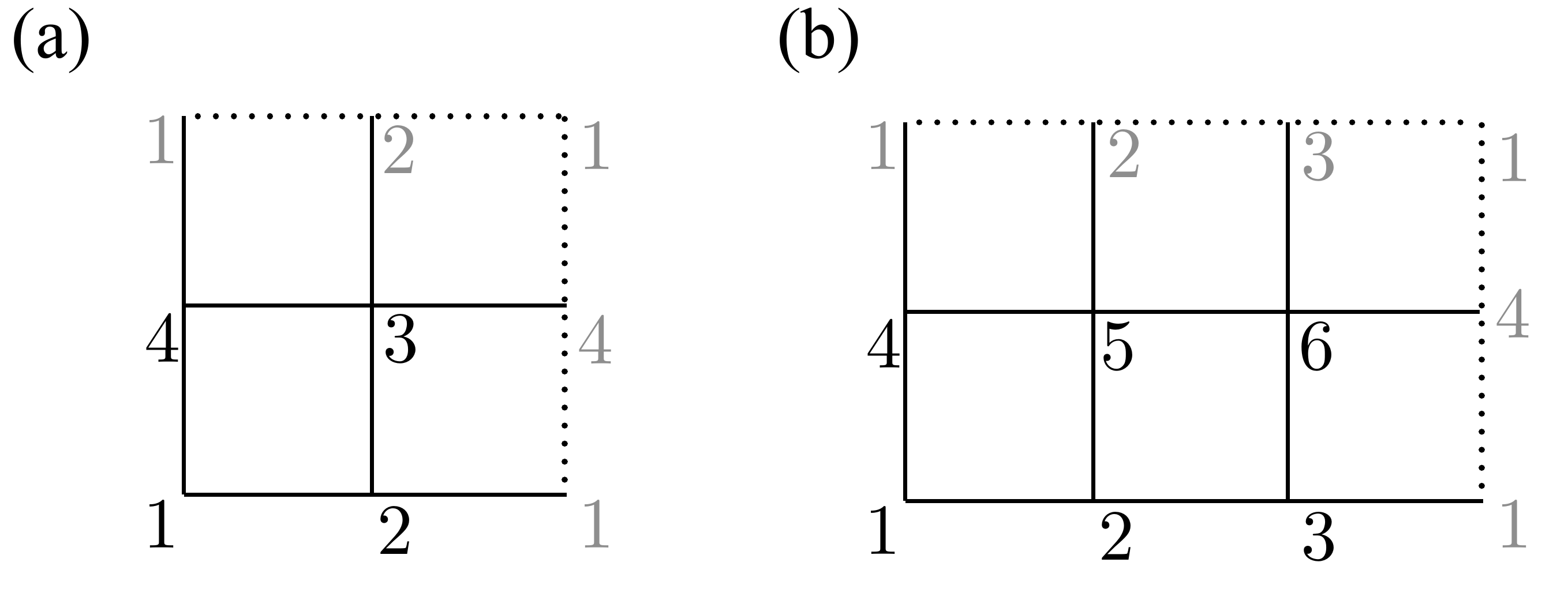}
  \caption{(a) A four-qubit state and (b) A six-qubit state for the Wen-Plaquette model on a torus.
  The right dashed line is identified as the left solid line and the top dashed line is identified as the bottom solid line in (a) and (b).
  The numbers are the site indices. The gray colored number is identified with the corresponding black colored number.}
  \label{F1}
\end{figure}

Before computing the upper bound of the maximum violation of the Bell's inequality of a ground state of the six-qubit in the Wen-Plaquette model, we consider a
six-qubit state,
%\bea
$|G\rangle_{\rm 6-qubit}=\frac{\lambda_+}{\sqrt{2}}\big(-|111000\rangle +|001110\rangle \big)+\frac{\lambda_-}{\sqrt{2}}\big(|100011\rangle+|010101\rangle\big)$,  
%\eea
with the site labels shown in Fig. \ref{F1} (b). We relate the upper bound of the maximum violation of the Bell's inequality to the concurrence of the state.
Two different bipartitions are considered: (1) subsystem $A$ contains site number six, and (2) subsystem $A$ contains sites number five and number six.
Here we use $\delta=1$ or $2$ as an indicator for the case one and the case two.
According to the Lemma,  we find that the upper bound of the maximum violation of the Bell's inequality could be expressed as a function of the concurrence of the pure state when we exchange the final site with the first site in the Bell's operator ($\tilde{\cal B}_n$) \bigg( We define
%\bea
$C(\delta)\equiv\sqrt{2\bigg(1-2^{\delta-1}\mbox{Tr}\rho_{A(\delta)}^2\bigg)}$
%\label{Eq: C}
%\eea 
as the concurrence of the pure state for the six-qubit state with respect to two different bipartitions.\bigg) 
(see Supplementary Material \cite{SM}). 

In the case that $\lambda_+=\lambda_-=1/\sqrt{2}$, the six-qubit state is a ground state of 
the Wen-Plaquette model and has the concurrence $C(\delta)=1$, which indicates the maximally entangled state. The entanglement entropy with respect to the two bipartitions are $S_{A(\delta=1)}=\ln 2$ and $S_{A(\delta=2)}=2\ln 2$, which could be obtained from the $R$-matrix through the inverse mapping, $\gamma\le 6=2\sqrt{13-2^{\delta+2}e^{-S_{A(\delta)}}}$.

In general, the entanglement entropy has a form $S_{A(L)}= \alpha L  -S_{\rm TEE}$,
in which the first term indicates the area law with $L$ being the length of the entangling boundary, $\alpha$ being a constant,
and $S_{\rm TEE}$ is called the topological entanglement entropy.
In the Wen-Plaquette model, the length of an entangling boundary $L$ is the number of bonds that connect subsystems $A$ and $B$.
We consider $L(\delta=1)=4$ and $L(\delta=2)=6$ to extract the area law of the entanglement entropy 
and obtain the topological entanglement entropy, $S_{\rm TEE}=\ln 2=\ln\sqrt{D}$, where $D=4$ is the number of distinct quasiparticles  \cite{Kitaev:2005dm, Grover:2013ifa}. 
Here, we demonstrate an indirect measure of the topological entanglement entropy by measuring the $R$-matrix.

\section{Outlook}
\label{5}
Recently, the ground states in the toric code model with three sites have realized in \cite{Li2016} by using a $^{13}$C-labeled trichloroethylene molecule. 
The ground states in the Wen-Plaquette model with four sites were also measured in the Iodotrifluroethylene (C$_2$F$_3$I) \cite{Luo:20161, Luo:20162} by using geometric algebra procedures \cite{Peng2010}, 
which could give a four-body interaction \cite{Luo:20161, Luo:20162} from the combination of two-body interactions and radio-frequency pulses \cite{Peng2010, Tseng1999}.
These systems provide natural platforms for testifying our theoretical studies.

\section*{Acknowledgments}
We would like to thank Ling-Yan Hung and Xueda Wen  for their insightful discussion. C.-T. M would like to thank Nan-Peng Ma for his encouragement. P.-Y. C. was supported by the Rutgers Center for Materials Theory. S.-K. C. was supported by the AFOSR, NSF QIS, ARL CDQI, ARO MURI, ARO and NSF PFC at JQI.
%\appendix

\baselineskip 22pt

\end{document}